\documentclass[prd,aps,amsfonts,superscriptaddress,nofootinbib,longbibliography,notitlepage,twocolumn]{revtex4-1}

\usepackage{graphicx}
\usepackage{xcolor}
\usepackage{rotating}
\usepackage{amsmath,amssymb,graphics,amsthm,isomath}

\usepackage[colorlinks=true, urlcolor=violet, linkcolor=blue, citecolor=red, hyperindex=true, linktocpage=true]{hyperref}
\usepackage[capitalise,compress]{cleveref}

\newtheorem{thm}{Theorem}
\newtheorem{cor}[thm]{Corollary}
\newtheorem{lem}[thm]{Lemma}

\makeatletter
\renewcommand{\p@subsection}{}
\renewcommand{\p@subsubsection}{}
\makeatother

\usepackage{xcolor}
\usepackage{mathtools}

\usepackage{dsfont}

\begin{document}

\title{Bound on quantum scrambling with all-to-all interactions}

\author{Chao Yin}
\email{chao.yin@colorado.edu}
\affiliation{Department of Physics, Peking University, Beijing 100871, China}
\affiliation{Department of Physics, University of Colorado, Boulder CO 80309, USA}

\author{Andrew Lucas}
\email{andrew.j.lucas@colorado.edu}
\affiliation{Department of Physics, University of Colorado, Boulder CO 80309, USA}
\affiliation{Center for Theory of Quantum Matter, University of Colorado, Boulder CO 80309, USA}

\begin{abstract}
We prove bounds on operator growth and infinite temperature out-of-time-ordered correlators in many-body systems with $N$ spin-$\frac{1}{2}$ degrees of freedom which interact via two-body all-to-all interactions.  
Our results parametrically improve previous bounds, and sharply constrain when and how quantum simulators, including trapped ion crystals and cavity quantum electrodynamics,  can study quantum gravity.

\end{abstract}

\date{\today}

\maketitle

\section{Introduction}
In the past few years, the holographic correspondence between quantum many-body systems and quantum gravity in one more spacetime dimension \cite{Maldacena:1997re} has attracted intense interest.  In particular, the realization that microscopic models including the Sachdev-Ye-Kitaev model \cite{Sachdev:2015efa,Maldacena:2016hyu,Kitaev:2017awl} might realize (aspects of) quantum gravity has set off a hunt for microscopic models that mimic quantum gravity, and might also be studied experimentally \cite{Chew_2017,Chen_2018,Marino_2019,Lewis_Swan_2019,Alavirad_2019,Bentsen_2019prl,Bentsen_2019prx}.

A key property of quantum black holes (and thus a theory of quantum gravity) is that they are \emph{fast scramblers} \cite{Sekino:2008he}.  For our purposes, fast scrambling means that out-of-time-ordered correlators (OTOCs) exhibit exponential growth \cite{Shenker:2013pqa}. In a theory of $N$ spin-$\frac{1}{2}$ degrees of freedom, we expect that for any $1\le i,j\le N$, \begin{equation}
    \left\langle [X_i(t),X_j]^2 \right\rangle \sim -\frac{1}{N}\mathrm{e}^{\lambda t}. \label{eq:toyotoc}
\end{equation}
Here $X_i$, $Y_i$ and $Z_i$ denote the three Pauli matrices acting on spin $i$.  The key feature of (\ref{eq:toyotoc}) is that $\lambda$ is independent of $N$, and so the OTOC (as we defined it) becomes of order 1 in a \emph{scrambling time} \begin{equation}
    t_{\mathrm{s}} \sim \log N. \label{eq:tslogN}
\end{equation}
(\ref{eq:tslogN}) is believed to hold in all theories of quantum gravity, for (almost?) every pair of $i$ and $j$.  Note that (\ref{eq:tslogN}) serves as our (informal) definition of scrambling time.

The canonical Lieb-Robinson theorem \cite{Lieb1972}, which says that quantum information spreads ballistically in a $d$-dimensional lattice, forbids fast scrambling in conventional lattice models.  However, generalizations of the Lieb-Robinson bounds to spin systems defined on more abstract interaction graphs, including those with all-to-all interactions (each spin couples to each other spin), do suggest fast scrambling is permitted \cite{Lashkari_2013,Bentsen_2019,guo2019signaling}.  

Happily, it is experimentally possible to realize the spatially non-local interactions required of a fast scrambler.  As a simple example, we can realize the Hamiltonian \begin{equation}
    H = \frac{1}{N^\alpha} \sum_{i,j=1}^N J_{ij}(t)Z_i Z_j + \sum_{i=1}^N \sum_{A=1}^3 h^A_i(t) X^A_i \label{eq:H}
\end{equation}
in a trapped ion crystal \cite{Britton_2012} (with expected exponent $\alpha=1$).   Hamiltonians with similar simple all-to-all interactions can be achieved in cavity quantum electrodynamics \cite{Leroux_2010,thompson2020} (with $\alpha=0$).  Remarkably, in these platforms it is possible to measure certain kinds of OTOCs \cite{Garttner_2017, Li_2017,wei2019OTOC}. Here $X^A_i = \lbrace X_i,Y_i,Z_i\rbrace$ is shorthand for the three Pauli matrices, and $\alpha$ is a free parameter governing the strength of the all-to-all interactions; we take $J_{ij}$ to scale independently of $N$. In the simplest experiments, all $J_{ij}=J$ are the same. If such a system can model a fast scrambler, it would allow for near-term experimental tests of aspects of quantum gravity.

Our goal is to rigorously address the extent to which (\ref{eq:H}), along with many generalizations, could realize a fast scrambler in an experiment.  We will show that at infinite temperature, in this family of models (\ref{eq:H}), \begin{equation}
    t_{\mathrm{s}} \gtrsim N^{\alpha - \frac{1}{2}}. \label{eq:main}
\end{equation}
(We postpone the precise statement and its proof.) Hence it is impossible to have both fast scrambling (which requires $\alpha\le \frac{1}{2}$) and extensivity of the energy spectrum ($\alpha\ge 1$), at least in the  model with $J_{ij}=J$.

\section{Implications}
Our main result (\ref{eq:main}) is complementary to recent works \cite{li2020fast,belyansky2020minimal} which have proposed studying fast scrambling in models of a similar form to (\ref{eq:H}).  Our bound (\ref{eq:main}) is not incompatible with their key results, so long as $\alpha\le \frac{1}{2}$ is taken. Whether such a small value of $\alpha$ has further interesting consequences or constraints on the many-body dynamics is an interesting open question. It is unclear whether this constraint is irrelevant for the faithful simulation of quantum gravity in an experiment. We do note, however, that our bound (\ref{eq:main}) may not be tight in some models, in which case (\ref{eq:tslogN}) might be achieved even when $\alpha<\frac{1}{2}$ \cite{li2020fast}.  However, we will conclude this paper by showing that it is possible to violate (\ref{eq:tslogN}) in at least one model with $\alpha<\frac{1}{2}$.  Our bound (\ref{eq:main}) cannot be parametrically tightened (at least when $\alpha\ge \frac{1}{2}$).

Indeed, (\ref{eq:main}) has clear implications for how a quantum simulator, such as a trapped ion crystal, could be used for the experimental study of information scrambling and quantum gravity.  Certainly we must take $\alpha\le \frac{1}{2}$ to realize fast scrambling.  The only way for such a model to be thermodynamically extensive is for $J_{ij}$ to be a matrix with order 1 entries and maximal eigenvalue $N^{1/2}$.  Heuristically this means that $J_{ij}$ is a random matrix \cite{mehta}.  Unfortunately, such a regime is not yet realized in a coherent quantum simulation with hundreds of qubits.  For example, focusing on trapped ion platforms, such a regime would require detuning the driving laser very far from the vibrational modes of the ion crystal \cite{Britton_2012}, leading to very weak collective interactions.

The more practical alternative for experiments is to use the single-site fields in (\ref{eq:H}) to dephase the many-body wave function, leading to genuine quantum dynamics and scrambling.  After all, since all Pauli $Z$s commute in (\ref{eq:H}), having single-site $X$ and $Y$ fields is mandatory to realizing chaos.  The results of \cite{belyansky2020minimal} suggest this approach may be feasible.  However, the Hamiltonian must then be strongly time-dependent, meaning that no finite temperature physics may be realized. As the emergence of a semiclassical bulk geometry out of quantum dynamics critically relies on a low temperature compared to microscopic energy scales, many questions about quantum gravity may be inaccessible.

Our bound (\ref{eq:main}) is parametrically stronger than existing Lieb-Robinson bounds \cite{Lashkari_2013, Bentsen_2019,guo2019signaling}.  We derived it using a more general operator growth formalism developed in \cite{chen2019operator,chen2019finite,lucas2019nonperturbative,tran2020hierarchy}, which relies on the simple relation between OTOCs and operator size at infinite temperature \cite{nahum_operator_2018,von_keyserlingk_operator_2018,Roberts:2018mnp}. 
The Lieb-Robinson bounds of \cite{Lashkari_2013, Bentsen_2019,guo2019signaling} might be saturated by studying OTOCs prepared in finely tuned initial states (the infinite temperature ensemble measures the value of the correlator in a typical state).  Indeed, \cite{tran2020hierarchy} recently discovered that there are two separate notions of locality that arise in systems with power law interactions;  it would not be surprising if a similar phenomenon arose in models with all-to-all interactions.

It is worth keeping in mind that ``fast scrambling" is not necessarily the ``fastest scrambling" in nature \cite{Bentsen_2019,lucas2019star}.  It is possible to find models with $N$-independent scrambling times, which are certainly not holographically dual to quantum gravity.  In particular, it is possible to find quantum mechanical models, defined on ``star-like graphs" where $N$ spins are coupled to a single spin, where the OTOC in (\ref{eq:toyotoc}) becomes order 1 in an $N$-independent time, violating (\ref{eq:tslogN}).  We do not expect that such models are described by a holographic dual, as all known holographic models exhibit the exponential growth of (\ref{eq:toyotoc}), both at finite temperature and infinite temperature. In the future, we hope to find further-refined probes of holography and quantum gravity to better discriminate between holographic and non-holographic models with non-local interactions.

Finally, we emphasize that another important feature of models of quantum gravity is that at finite temperature $T$, the Lyapunov exponent $\lambda$ in (\ref{eq:toyotoc}) is bounded: $\lambda \le 2\pi T$ \cite{Maldacena:2015waa}.  Extending our work to finite temperature is a critical (and quite challenging) technical problem.  Nevertheless, as emphasized above, we do not expect a model which is not a fast scrambler at $T=\infty$ (in the sense of (\ref{eq:toyotoc})) will become a fast scrambler at finite $T$. 

\section{Formal discussion} 
The remainder of this paper consists of the proof of (\ref{eq:main}).  First, we make precise our assumptions and state a theorem; we conclude with its proof. We study quantum many-body systems consisting of $N$ spin-$\frac{1}{2}$ degrees of freedom.  The spins are labeled by vertices $v$ in the set $V$.  The Hilbert space $\mathcal{H}$ is (isomorphic to) $(\mathbb{C}^2)^{\otimes N}$.  As above, $X_i$, $Y_i$ and $Z_i$ denote the Pauli matrices (normalized as $X_i^2=1$) on spin $i$ ($i\in V$).  

In a nutshell, we will prove our bound on OTOC growth by interpreting $\langle [X_i(t),X_j]^2\rangle$ as the length of a vector $[X_i(t),X_j]$.  In fact, this is natural, since the linearity of quantum mechanics means that $[X_j,\cdot]$ is a linear transformation on a vector space of operators acting on Hilbert space, and that $X_i(t)$ is obtained by applying an appropriate linear transformation to the vector $X_i$.  The inner product which calculates the length of these vectors is the Frobenius norm.  This approach will allow us to obtain much stronger bounds than Lieb-Robinson bounds, which control the operator norm (and are not useful for sharply bounding the OTOC). 

Let $\mathcal{B}$ denote the set of Hermitian operators acting on $\mathcal{H}$.  It is spanned by products of Pauli matrices on every qubit, along with the identity: \begin{equation}
    \mathcal{B} = \bigotimes_{i=1}^N \mathcal{B}_i=\bigotimes_{i=1}^N \lbrace 1, X,Y,Z\rbrace_i. \label{eq:basisB}
\end{equation}
We denote elements of $\mathcal{B}$ by $|\mathcal{O})$ -- these are just like Dirac kets, but with a parentheses to emphasize the vector space is $\mathcal{B}$, not $\mathcal{H}$.  The appropriate inner product on $\mathcal{B}$ for studying infinite temperature chaos is \begin{equation}
    (A|B) = 2^{-N}\mathrm{tr}(A^\dagger B).
\end{equation}
Note that $(A|A)$ is, in general ,much smaller than $\lVert A\rVert^2$, where the latter represents the conventional (squared) operator norm, which for a Hermitian matrix returns the maximal eigenvalue (squared).  The basis vectors of (\ref{eq:basisB}) are orthonormal.  Time translation on $\mathcal{B}$ is generated by the Liouvillian \begin{equation}
    \mathcal{L}(t) = \mathrm{i}[H(t),\cdot]. \label{eq:liouvillian}
\end{equation}
$\mathcal{L}(t)$ is an antisymmetric linear operator on $\mathcal{B}$, and \begin{equation}
    \frac{\mathrm{d}}{\mathrm{d}t} |\mathcal{O}(t)) = \mathcal{L}(t)|\mathcal{O}(t)).
\end{equation}

Define the projection operation \begin{equation}
    \mathbb{P}_i |\mathcal{O}) = |\mathcal{O}) - \frac{1}{2}\left| 1_i \otimes \mathrm{tr}_i \mathcal{O}\right)
\end{equation}
where $\mathrm{tr}_i$ denotes partial trace over qubit $i$.  This operation removes all products of Pauli matrices which include the identity on site $i$.  Clearly, the infinite temperature OTOC obeys \begin{align}
    \left|2^{-N}\mathrm{tr}\left([X_i(t),X_j]^2\right)\right| &=  \left|2^{-N}\mathrm{tr}\left([\mathbb{P}_jX_i(t),X_j]^2\right)\right| \notag \\
    &\le 4 (X_i(t)|\mathbb{P}_j|X_i(t)). \label{eq:pjotoc}
\end{align}
This conclusion generalizes to allow for $X_i$ and $X_j$ to be any linear superposition of Paulis.  For any subset $S\subseteq V$ similarly define $\mathbb{P}_S$ to be the projection onto all operators which have at least one non-identity Pauli on at least one vertex $i\in S$.

Let $0<a<1$ be an $N$-independent constant.  We \emph{define} the scrambling time as the smallest possible time $t_{\mathrm{s}}>0$ at which the projection in (\ref{eq:pjotoc}) is large: \begin{equation}
   t_{\mathrm{s}} = \inf_{t\in\mathbb{R}^+} \left\lbrace \sup_{\mathcal{O}_i\in\mathcal{B}_i} \sum_{j=1}^N \frac{(\mathcal{O}_i(t)|\mathbb{P}_j|\mathcal{O}_i(t))}{(\mathcal{O}_i|\mathcal{O}_i)} > aN\right\rbrace. \label{eq:tsdef}
\end{equation} Our key conclusions do not depend on $a$.  The quantity in the sum above is called the \emph{average operator size} \cite{nahum_operator_2018,von_keyserlingk_operator_2018,Roberts:2018mnp}.

Formally, we say that a graph $\Lambda=(V,E)$ exists in $d$ spatial dimensions ($\Lambda$ is $d$-dimensional) when the following properties hold.  Let $N=|V|$ be the number of vertices.  Pick any vertex $v\in V$.  Define $S_D$ to be the set of all vertices that are a distance $\le D$ away from $v$: namely, for any $x\in S_D$ a path of at most $D$ edges exists from $v$ to $x$.  We say $\Lambda$ is $d$-dimensional when there exist finite constants $0<c_1,c_2<\infty$ that are independent of $N$, such that for every $v$, 
\begin{subequations}\begin{align}
    |S_D| &\le c_1 D^d, \label{eq:SDbound}\\
    |S_D| - |S_{D-1}| &\le c_2 D^{d-1}.
\end{align}
\end{subequations}
We are now ready to state our main result: 
\begin{thm}
Let $\Lambda=(V,E)$ be a $d$-dimensional lattice graph with $|V|=N$ vertices, such that each vertex in the graph has at most $k$ vertices, with $k$ finite and independent of $N$.  Consider quantum dynamics on $\mathcal{H}=(\mathbb{C}^2)^N$ generated by  \begin{align}
    H(t) &= \sum_{i,j \in V} \frac{J_{ij}^{AB}(t)}{N^\alpha} X^A_iX^B_j + \sum_{\lbrace i,j\rbrace \in E} K^{AB}_{ij}(t) X^A_i X^B_j \notag \\
    &+ \sum_{i\in V} h^A_i(t) X^A_i \label{eq:thmH}
\end{align}
where $|J_{ij}^{AB}(t)|\le 1$ and $|K^{AB}_{ij}(t)| \le 1$. We employ Einstein summation convention on $A,B$ indices.  Then there exists $0<C<\infty$ such that if $\alpha < 1+\frac{1}{d}$, \begin{equation}
   t_{\mathrm{s}} > C N^{(2\alpha-1)/(d+2)} . \label{eq:thm}
\end{equation}
In other words, for any $\alpha>\frac{1}{2}$, the Hamiltonian (\ref{eq:thmH}) is not a fast scrambler.  For $\alpha \ge 1+\frac{1}{d}$, $t_{\mathrm{s}} > CN^{1/d}$ is not affected by the non-local interactions.
\end{thm}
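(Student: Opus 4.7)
The plan is to control the operator-size functional
\begin{equation*}
\Sigma(t) := \sum_{j\in V}\frac{(\mathcal{O}(t)|\mathbb{P}_j|\mathcal{O}(t))}{(\mathcal{O}|\mathcal{O})}
\end{equation*}
by means of a differential inequality, then integrate it from $\Sigma(0)=1$ up to $\Sigma(t_s)\ge aN$. Expanding the normalized Heisenberg operator in the Pauli-string basis as $\mathcal{O}(t)=\sum_Q c_Q(t) Q$, one has $\Sigma(t) = \sum_Q s(Q)\,|c_Q(t)|^2 = (\mathcal{O}|S_{\mathrm{op}}|\mathcal{O})$, where $S_{\mathrm{op}}|Q)=s(Q)|Q)$ and $s(Q)$ denotes the support size of $Q$. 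Using $\mathcal{L}^\dagger=-\mathcal{L}$, this gives
\begin{equation*}
\frac{d\Sigma}{dt} = (\mathcal{O}|[S_{\mathrm{op}}, \mathcal{L}]|\mathcal{O}),
\end{equation*}
and each Hamiltonian term contributes additively.

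Next, split $\mathcal{L} = \mathcal{L}_1 + \mathcal{L}_E + \mathcal{L}_{\mathrm{a2a}}$ according to the three pieces of $H$ in \eqref{eq:thmH}. The single-site piece merely rotates Paulis on isolated sites, commutes with $S_{\mathrm{op}}$, and drops out. For the lattice piece, each edge $\{i,j\}\in E$ can change the support size only when it straddles the boundary $\partial S$ of the current support; combined with the bounded degree $k$ and the volume growth $|S_D|\le c_1 D^d$ (which forces ballistic spreading to fill a $d$-dimensional ball of boundary $\lesssim |S|^{(d-1)/d}$), a standard Lieb-Robinson-type argument gives
\begin{equation*}
\left.\frac{d\Sigma}{dt}\right|_{\mathrm{lattice}} \le C_1\, \Sigma^{(d-1)/d}.
\end{equation*}

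The heart of the proof is the all-to-all bound. Although $\sim \Sigma(N-\Sigma)$ pairs can grow the support, each with coupling $\le N^{-\alpha}$, the naive triangle-inequality estimate yields rate $\sim N^{1-\alpha}\Sigma$ (essentially exponential growth), which trivializes the bound. One must apply Cauchy-Schwarz in a way that exploits the orthogonality of the distinct size-raising Pauli strings produced by different $(i,j,A,B)$ labels, arriving at a sharpened estimate
\begin{equation*}
\left.\frac{d\Sigma}{dt}\right|_{\mathrm{a2a}} \le C_2\, N^{1-\alpha}\,\Sigma^{\beta}
\end{equation*}
with an exponent $\beta<1$. Integrating the combined inequality from $\Sigma=1$ to $\Sigma=aN$ then recovers the two regimes of the theorem: for $\alpha\ge 1+1/d$ the lattice term dominates throughout and yields $t_s \ge CN^{1/d}$; for $\alpha<1+1/d$, matching the two contributions in their crossover regime yields $t_s \ge CN^{(2\alpha-1)/(d+2)}$.

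The main obstacle is establishing the sharpened all-to-all bound with the correct effective exponent. Standard operator-norm estimates lose a factor of $\sqrt{N}$ and are too weak; one must exploit that $\mathcal{O}$ is a normalized vector of coefficients in Pauli space (not an arbitrary bounded operator) and carefully analyze how the coherent sum over pairs $(i,j)$ acts on the specific weight distribution of $|c_Q|^2$ across Pauli strings of different support sizes. This is precisely where the parametric improvement over earlier Lashkari–Bentsen–Guo-type Lieb-Robinson bounds must come from.
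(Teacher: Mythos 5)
Your proposal takes a genuinely different route from the paper, and it has a real gap at its heart.

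The paper does \emph{not} establish a differential inequality for the total size $\Sigma(t)$. Instead it fixes a ball $S_D$ around the initial site $v$, applies the Duhamel identity with the ``local interior'' Liouvillian $\mathcal{L}_{<D}$ as reference, and bounds the $\ell_2$ weight that leaks into $\bar S_D$. The leakage has two pieces: boundary hops (controlled by a quantum-walk Lieb--Robinson lemma, exponentially small once $D > 2\mu t$) and the non-local piece $H_{<\mathrm{NL}}$, whose contribution is bounded by $t\lVert H_{<\mathrm{NL}}\rVert_2$. The key use of Pauli-string orthogonality is in computing $\lVert H_{<\mathrm{NL}}\rVert_2^2 \lesssim |S_D|\,N^{1-2\alpha}$ — a Frobenius norm of a \emph{Hamiltonian piece}, not a rate-of-change of $\Sigma$. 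Optimizing $D$ subject to $D\gtrsim\mu t$ and $|S_D|<aN/2$ gives $t_s^{1+d/2}\gtrsim N^{\alpha-1/2}$ and hence the stated exponent. Crucially, the unitarity of $\mathrm{e}^{\mathcal{L}(t-s)}$ means the paper never has to track what happens to the leaked part after it escapes.

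Your plan leaves the crucial step — the claimed all-to-all bound $d\Sigma/dt|_{\mathrm{a2a}} \le C_2 N^{1-\alpha}\Sigma^\beta$ with $\beta<1$ — entirely unproven, and you yourself flag it as ``the main obstacle.'' That bound is essentially the whole theorem, so the proposal is incomplete. But there is also a structural problem suggesting the plan cannot be completed in the proposed form. First, the lattice estimate $d\Sigma/dt|_{\mathrm{lattice}}\le C_1\Sigma^{(d-1)/d}$ is not true for a general state in $\mathcal{B}$: it implicitly assumes the operator's support looks like a ball, but once $\mathcal{L}_{\mathrm{a2a}}$ scatters weight to distant random sites, essentially every occupied site is a ``boundary'' site and the rate can be $\sim\Sigma$, not $\Sigma^{(d-1)/d}$. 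The two pieces therefore do not decouple additively in a differential inequality for $\Sigma$ alone. Second, working backwards from the answer, a combined inequality $d\Sigma/dt \le C_1\Sigma^{(d-1)/d}+C_2N^{1-\alpha}\Sigma^\beta$ integrated from $1$ to $aN$ would require $\beta$ to depend on $\alpha$ (e.g., $\beta=(\alpha d+1)/(d+2)$ in the all-to-all–dominated case) in order to reproduce $t_s\gtrsim N^{(2\alpha-1)/(d+2)}$. Since $\beta$ should be an $\alpha$-independent property of the operator-growth structure, this signals that the single-scalar differential inequality is the wrong vehicle. The paper avoids both issues by keeping the light-cone scale $D$ (equivalently $|S_D|\sim t^d$) as an explicit auxiliary variable, rather than collapsing all information into $\Sigma$. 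You would need to augment your differential inequality with a second tracked quantity — effectively the Lieb--Robinson light-cone radius — at which point you have essentially reconstructed the paper's argument.
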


\section{Proof of the Theorem}
For notational simplicity, we assume below that $H(t)$ does not depend on time.  However, the proof below immediately generalizes to the time-dependent case, which we ``leave as an exercise to the reader".

The strategy of the proof is as follows: we aim to bound how quickly an operator $|X_i(t))$ rotates into a space of ``large" operators, supported far from site $i$.  What we will see is that due to the Frobenius norm controlling operator length, the all-to-all interaction rotates a single Pauli $X_i$ into a sum of Pauli strings of length $\sqrt{N} \times N^{-\alpha}$; the factor of $\sqrt{N}$ comes from the fact that $X_iX_j$ is orthogonal to $X_iX_k$ when $j\ne k$, and so on: it is the \emph{squared lengths} which should be added together.  In contrast, a Lieb-Robinson bound would use the fact that the operator norm of $X_i(X_j+X_k)$ can be additive.  To obtain (\ref{eq:thm}), we need to further account for operator growth due to the local interactions; however, we can also see a hint for the critical value of  $\alpha=\frac{1}{2}$ from our simple argument.

We now provide the details of the proof.   Choose any vertex $v\in V$,  $D \in \mathbb{Z}^+$, and let $\bar S_D$ denote the complement of $S_D$.   Now let us define \begin{subequations}
\begin{align}
    H_{<D} &:= \sum_{\lbrace i,j\rbrace \subset S_D} K^{AB}_{ij} X^A_i X^B_j + \sum_{i\in S_D} h^A_i X^A_i, \\
    H_D &:= \sum_{i \in S_D, j\notin S_D} K^{AB}_{ij}X^A_i X^B_j, \\
    H_{>D} &:= \sum_{\lbrace i,j\rbrace \subset \bar S_D} K^{AB}_{ij} X^A_i X^B_j + \sum_{i\in \bar S_D} h^A_i X^A_i, \\
    H_{\mathrm{<NL}} &:= \left(\sum_{i\in S_D, j\notin S_D} + \sum_{\lbrace i,j\rbrace \subset S_D}\right) \frac{J_{ij}^{AB}}{N^\alpha} X^A_iX^B_j, \\
    H_{\mathrm{>NL}} &:= \sum_{\lbrace i,j\rbrace \subset \bar S_D} \frac{J_{ij}^{AB}}{N^\alpha} X^A_iX^B_j.
\end{align}
\end{subequations}
$H_{>D}$ and $H_{\mathrm{>NL}}$ are the terms in the Hamiltonian that do not act on vertices in $S_D$; $H_{<D}$ acts entirely within $S_D$; $H_D$ and $H_{<\mathrm{NL}}$ denote terms which connect $S_D$ and $\bar S_D$.  We define $\mathcal{L}_{<D}(t)$, etc., in the obvious way, using (\ref{eq:liouvillian}).  Note that \begin{equation}
    H = H_{<D} + H_D + H_{>D} + H_{\mathrm{<NL}}+H_{\mathrm{>NL}}.
\end{equation}  

Following \cite{chen2019operator,chen2019finite,lucas2019nonperturbative,tran2020hierarchy}, we invoke the Duhamel identity \begin{equation}
    \mathrm{e}^{\mathcal{L}t} = \mathrm{e}^{\mathcal{L}_{<D}t} + \int\limits_0^t\mathrm{d}s \; \mathrm{e}^{\mathcal{L}(t-s)} (\mathcal{L}-\mathcal{L}_{<D}). \mathrm{e}^{\mathcal{L}_{<D}s} \label{eq:duhamel}
\end{equation}
Let $|\mathcal{O}_v)$ denote a (linear combination of) Paulis on vertex $v$ with $(\mathcal{O}_v|\mathcal{O}_v) = 1$.  Since $|\mathcal{O}_v(t)) = \mathrm{e}^{\mathcal{L}t}|\mathcal{O}_v)$, we can apply (\ref{eq:duhamel}).  Now, how much of $|\mathcal{O}_v(t))$ has support in $\bar S_D$?  Observe that 
\begin{equation}
\mathbb{P}_{\bar S_D} |\mathcal{O}_v(t)) =  \mathbb{P}_{\bar S_D} \int\limits_0^t\mathrm{d}s \; \mathrm{e}^{\mathcal{L}(t-s)} (\mathcal{L}_D+\mathcal{L}_{\mathrm{<NL}}) \mathrm{e}^{\mathcal{L}_{<D}s} | \mathcal{O}_v).
\end{equation} 
Since $\mathrm{e}^{\mathcal{L}t}$ and $\mathrm{e}^{\mathcal{L}_{<D}t}$ are unitary transformations, they do not change the length of $|\mathcal{O}_v)$ as measured by our inner product.  Thus, we use the triangle inequality to obtain \begin{align}
    \left\lVert\mathbb{P}_{\bar S_D} |\mathcal{O}_v(t))\right\rVert_2 &\le \int\limits_0^t \mathrm{d}s  \left\lVert \mathcal{L}_D \mathrm{e}^{\mathcal{L}_{<D}s} | \mathcal{O}_v) \right\rVert_2 \notag \\
    &\;\; + \int\limits_0^t \mathrm{d}s  \left\lVert \mathcal{L}_{\mathrm{<NL}} \mathrm{e}^{\mathcal{L}_{<D}s} | \mathcal{O}_v) \right\rVert_2\label{eq:2terms}
\end{align}
where we have defined $\lVert \mathcal{O}\rVert_2^2 = (\mathcal{O}|\mathcal{O})$.  The left hand side bounds the OTOC which controls the scrambling time.

We first bound the top line of (\ref{eq:2terms}): \begin{lem}
Let $Q_D$ denote the set of vertices exactly distance $D$ from $v$: $Q_D= S_D-S_{D-1}$.  Then there exists $0<\mu<\infty$ such that \begin{equation}
    \lVert \mathbb{P}_{Q_D} \mathrm{e}^{\mathcal{L}_{<D}t}|\mathcal{O}_v) \rVert_2 \le \mathrm{e}^{\mu t-D }. \label{eq:LR}
\end{equation}
\end{lem}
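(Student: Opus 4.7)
The plan is to reduce the claimed bound to a standard Lieb--Robinson estimate for the strictly local Hamiltonian $H_{<D}$, and then to convert from operator norm to the Frobenius norm used in (\ref{eq:LR}). Since $H_{<D}$ consists of one- and two-body terms supported on the subgraph induced by $S_D$, with maximum degree $k$ and local interaction strengths bounded by constants (from $|h^A_i|,|K^{AB}_{ij}|\le 1$ and a finite number of Pauli indices), a standard Lieb--Robinson bound \cite{Lieb1972} yields constants $C_0,\mu_0>0$ independent of $N$ and $D$ such that for every vertex $j\in S_D$ and Pauli component $X^A_j$,
\begin{equation}
\bigl\lVert [\,e^{\mathcal{L}_{<D}t}\mathcal{O}_v,\,X^A_j]\bigr\rVert_\infty \le C_0\, e^{\mu_0 t - d(v,j)}.
\end{equation}
Since $\lVert\cdot\rVert_2\le\lVert\cdot\rVert_\infty$ for the normalized inner product used in this paper, the same bound holds with the Frobenius norm on the left.

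Next, I would convert this commutator estimate into a bound on the site projection $\mathbb{P}_j$ via the Pauli-basis identity
\begin{equation}
(\mathcal{O}|\mathbb{P}_j|\mathcal{O}) \;=\; \frac{1}{8}\sum_{A=1}^3 \bigl\lVert [\mathcal{O},X^A_j]\bigr\rVert_2^2,
\end{equation}
which follows by expanding $\mathcal{O}$ in the Pauli basis: any Pauli string $P$ with $P_j\neq 1$ contributes $\sum_A\lVert [P,X^A_j]\rVert_2^2=8$ while $P_j=1$ contributes $0$, and distinct Pauli strings yield orthogonal commutators. Applied to $\mathcal{O}_v(t):=e^{\mathcal{L}_{<D}t}\mathcal{O}_v$, this upgrades the Lieb--Robinson estimate to $\lVert \mathbb{P}_j\mathcal{O}_v(t)\rVert_2 \le C_1\, e^{\mu_0 t - d(v,j)}$ for an absolute constant $C_1$.

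The final step is to sum over $j\in Q_D$. Using $\mathbb{P}_{Q_D}\le\sum_{j\in Q_D}\mathbb{P}_j$ as quadratic forms on the Pauli basis -- a Pauli string $P$ is sent to $0$ or $P$ by the left-hand side but to $|R(P)\cap Q_D|\,P$ by the right-hand side -- together with $|Q_D|\le c_2 D^{d-1}$ and $d(v,j)=D$ for $j\in Q_D$, one obtains
\begin{equation}
\lVert \mathbb{P}_{Q_D}\mathcal{O}_v(t)\rVert_2^2 \;\le\; c_2 C_1^2\, D^{d-1}\,e^{2\mu_0 t - 2D}.
\end{equation}
The main (and only) obstacle is the polynomial prefactor $D^{(d-1)/2}$ left after extracting the square root. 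This is absorbed by a standard device: the Lieb--Robinson series estimate $\sum_{n\ge d}x^n/n!\le e^{xe^{\lambda}-\lambda d}$ is flexible enough to give decay $e^{-(1+\epsilon)d}$ at the cost of a slightly larger velocity, after which $D^{(d-1)/2}e^{-\epsilon D}=O(1)$ and the leftover constant is absorbed by a small increase of $\mu$. This yields the stated inequality $\lVert \mathbb{P}_{Q_D}e^{\mathcal{L}_{<D}t}\mathcal{O}_v\rVert_2\le e^{\mu t-D}$; the core novelty is the single-Pauli-commutator identity connecting the Frobenius projection $\mathbb{P}_j$ to standard operator-norm Lieb--Robinson data, everything else being bookkeeping.
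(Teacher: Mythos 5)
Your proof is correct, but it takes a genuinely different route from the one in the paper. You reduce the lemma to the classical operator-norm Lieb--Robinson bound for $H_{<D}$, translate that into a bound on the site projector $\mathbb{P}_j$ via the Pauli-commutator identity $(\mathcal{O}|\mathbb{P}_j|\mathcal{O})=\tfrac{1}{8}\sum_A\lVert[\mathcal{O},X^A_j]\rVert_2^2$, and then sum over $Q_D$, absorbing the $D^{(d-1)/2}$ prefactor by choosing a steeper decay rate in the tail estimate. The paper instead gives a self-contained ``quantum-walk'' argument: it introduces the weighted size functional $\mathcal{F}=\sum_x b^{d_x}\mathbb{P}_x$, shows by a direct commutator estimate that $(\mathcal{O}_v(t)|\mathcal{F}|\mathcal{O}_v(t))$ grows at most exponentially, and then applies a Markov-type inequality $b^D(\mathcal{O}_v|\mathbb{P}_{Q_D}|\mathcal{O}_v)\le(\mathcal{O}_v|\mathcal{F}|\mathcal{O}_v)$ with $b=\mathrm{e}^2$. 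The paper's route avoids ever touching the operator norm, never produces a polynomial-in-$D$ prefactor to absorb, and is thematically aligned with the paper's central point that Frobenius-norm / operator-size arguments give sharper results than operator-norm Lieb--Robinson bounds (the authors even remark this yields improved constants in $d=1$). Your route buys familiarity and modularity --- it invokes a standard black-box result and cleanly isolates the norm-conversion step (the Pauli identity, which is a nice observation) --- but requires the extra bookkeeping of sharpening the decay exponent to kill the $|Q_D|$ sum, which is exactly the kind of friction the paper's weighted-functional argument sidesteps. Both are complete; the paper explicitly flags that its proof is optional and any standard Lieb--Robinson derivation suffices, which is essentially what you supplied.
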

\begin{proof}
This is a well-known result \cite{Lieb1972}; the reader should feel free to skip.  Still,  we present an elegant proof of this lemma, of interest to specialists, using quantum walks \cite{lucas2019nonperturbative,tran2020hierarchy}. In one dimension a slightly improved version of this proof leads to stronger bounds on OTOCs than the provably optimal Lieb-Robinson-style bounds of \cite{chen2019operator,wang2019tightening}. The reason our bounds are stronger is that, as emphasized above, Lieb-Robinson bounds address operator norms, whereas our quantum walk approach directly bounds OTOCs or Frobenius norms.

For notational convenience, we denote for the proof of this lemma $\mathrm{e}^{\mathcal{L}_{<D}t}|\mathcal{O}) = |\mathcal{O}(t))$. Define  \begin{equation}
    \mathcal{F} := \sum_{x\in S_D} b^{d_x} \mathbb{P}_x
\end{equation}
where $d_x$ denotes the distance from $v$ to $x$. Observe that \begin{equation}
    \frac{\mathrm{d}}{\mathrm{d}t} (\mathcal{O}_v(t)|\mathcal{F}|\mathcal{O}_v(t)) = (\mathcal{O}_v(t)|[\mathcal{F},\mathcal{L}_{<D}]|\mathcal{O}_v(t))
\end{equation}
and that it is easy to (crudely) bound the right hand side: denoting $\varphi_x(t) := \lVert \mathbb{P}_x |\mathcal{O}_v(t)) \rVert_2$, we find that (using $|K_{ij}^{AB}|\le 1$) \begin{align}
    \frac{\mathrm{d}}{\mathrm{d}t} (\mathcal{O}_v|\mathcal{F}|\mathcal{O}_v) &\le \sum_{\substack{\lbrace x,y\rbrace \in S_D \\ \text{ and } \lbrace x,y\rbrace \in E}} (\mathcal{O}_v|  [b^{d_x}\mathbb{P}_x + b^{d_y}\mathbb{P}_y, \mathcal{L}_{xy}]|\mathcal{O}_v) \notag \\
    &\le 36 \sum_{\substack{\lbrace x,y\rbrace \in S_D \\ \text{ and } \lbrace x,y\rbrace \in E}} \varphi_x\varphi_y \left(b^{d_x} + b^{d_y}\right) \notag \\
    &\le 18 \sum_{\substack{\lbrace x,y\rbrace \in S_D \\ \text{ and } \lbrace x,y\rbrace \in E}}  \left(b^{d_x} + b^{d_y}\right)\left(\varphi_x^2 + \varphi_y^2\right) \notag \\
    &\le (\mathcal{O}_v|\mathcal{F}|\mathcal{O}_v) \times 18(1+b)k.
\end{align}
In the first line we have defined $\mathcal{L}_{xy} = \mathrm{i}[K^{AB}_{xy}X^A_x X^B_y,\cdot]$; in the third line we have used that $2\varphi_x\varphi_y \le \varphi_x^2+\varphi_y^2$; in the fourth line we have used that only nearest neighbor interactions on $\Lambda$ are allowed by the local terms: if $\lbrace x,y\rbrace \in E$, $|d_x-d_y|\le 1$.
Therefore letting $\mu=9(1+b)k$, \begin{equation}
    (\mathcal{O}_v(t)|\mathcal{F}|\mathcal{O}_v(t)) \le \mathrm{e}^{2\mu t}. \label{eq:expgrowth}
\end{equation}

The final observation is that \begin{equation}
    (\mathcal{O}_v|\mathbb{P}_{Q_D}|\mathcal{O}_v) b^{D} \le \sum_{x\in Q_D} (\mathcal{O}_v|\mathbb{P}_{x}|\mathcal{O}_v)b^D \le (\mathcal{O}_v|\mathcal{F}|\mathcal{O}_v). \label{eq:markov}
\end{equation}
Combining (\ref{eq:expgrowth}) and (\ref{eq:markov}) and setting $b=\mathrm{e}^2$, we obtain (\ref{eq:LR}).
 \end{proof}

This lemma then allows us to crudely (but easily!) bound the first line of (\ref{eq:2terms}) as follows: \begin{align}
    \int\limits_0^t \mathrm{d}s  \left\lVert \mathcal{L}_D \mathrm{e}^{\mathcal{L}_{<D}s} | \mathcal{O}_v) \right\rVert_2 &\le  2t \lVert H_D\rVert \sup_{s\in [0,t]} \left\lVert \mathbb{P}_{Q_D} \mathrm{e}^{\mathcal{L}_{<D}s}|\mathcal{O}_v)\right\rVert_2 \notag \\
    &\le Mt D^{d-1} \mathrm{e}^{\mu t - D}.  \label{eq:last1}
\end{align}
where $\lVert H_D\rVert$ denotes the conventional operator norm (in this case, maximal eigenvalue) of $H_D$ and $M$ is an order 1 constant related to the degree of $\Lambda$.
Then, the second line of (\ref{eq:2terms}) is bounded simply: \begin{align}
     \int\limits_0^t \mathrm{d}s  \left\lVert \mathcal{L}_{\mathrm{<NL}} \mathrm{e}^{\mathcal{L}_{<D}s} | \mathcal{O}_v) \right\rVert_2 &\le t \sup_{s\in[0,t]}\lVert [H_{<\mathrm{NL}},\mathcal{O}_v(s)]\rVert_2 \notag \\  &\le 2t \lVert H_{\mathrm{<NL}}\rVert_2   \label{eq:last2}
\end{align}
where we have used the fact that $\mathcal{O}_v(s)$ has maximal eigenvalue 1 to simplify the calculation above.  Then we observe that \begin{align}
    \lVert H_{\mathrm{<NL}}\rVert_2^2 &= \left(\sum_{i\in S_D, j\notin S_D} + \sum_{\lbrace i,j\rbrace \in S_D}\right) \frac{(J_{ij}^{AB})^2}{N^{2\alpha}} \lVert X^A_i X^B_j \rVert_2^2 \notag \\
    &\le 9|S_D|N^{1-2\alpha}.  \label{eq:last3}
\end{align}
Now let us combine (\ref{eq:last1}), (\ref{eq:last2}) and (\ref{eq:last3}), evaluated at a value of $D$ obeying \begin{equation}
    D \ge 2 \mu t + D_0. \label{eq:D0}
\end{equation}where $D_0$ will be chosen below.  Using (\ref{eq:SDbound}), and when $t$ is large, we conclude  \begin{align}
    \left\lVert\mathbb{P}_{\bar S_D} |\mathcal{O}_v(t))\right\rVert_2 &\le \frac{Z^\prime
    \sqrt{c_1 \max(2\mu t, D_0)^d }t}{ N^{\alpha-1/2}} \notag \\
    &\;\;+ \frac{M^\prime}{2\mu} (2\mu t + D_0)^d \mathrm{e}^{-\mu t-D_0}  \label{eq:last4}
\end{align}
for finite constants $Z^\prime$ and $M^\prime$ independent of $N$. 


We now choose $D_0$ such that \begin{equation}
    \sqrt{\frac{a}{8}} >  \mathrm{e}^{-D_0/2} \sup_{t\in\mathbb{R}^+} \frac{M^\prime}{2\mu} (2\mu t + D_0)^d \mathrm{e}^{-\mu t-D_0/2} \label{eq:sqrta2}
\end{equation}
Note that $D_0$ can be chosen independent of $N$.
To understand why we make this choice, we return to our definition of scrambling time.  Suppose that we choose \begin{equation}
    |S_D|<\frac{aN}{2}. \label{eq:DaN2}
\end{equation}
At the scrambling time $t=t_{\mathrm{s}}$, by (\ref{eq:tsdef}) and (\ref{eq:DaN2}), \begin{equation}
    \frac{a}{2} < (\mathcal{O}_v(t_{\mathrm{s}})|\mathbb{P}_{\bar S_D} |\mathcal{O}_v(t_{\mathrm{s}})). \label{eq:sqrta}
\end{equation}
Now, let us assume that at the scrambling time, $t / N^{1/d} \rightarrow 0$.  In this case, we can always choose a $D$ compatible with (\ref{eq:D0}) and (\ref{eq:DaN2}). Combining (\ref{eq:last4}), (\ref{eq:sqrta2}), and (\ref{eq:sqrta}), we obtain \begin{equation}
    t_{\mathrm{s}}^{1+d/2} > \sqrt{\frac{a}{8}} \frac{N^{\alpha-\frac{1}{2}}}{Z^\prime \sqrt{c_1(2\mu)^d}}, \label{eq:finalthm}
\end{equation}
which leads to (\ref{eq:thm}) so long as $\alpha<1+\frac{1}{d}$.  If instead $\alpha\ge1+\frac{1}{d}$, (\ref{eq:finalthm}) implies that $t_{\mathrm{s}}$ scales faster $N^{1/d}$, which violates our assumption that we could choose a $D$ such that $S_D\subset V$ while (\ref{eq:finalthm}) holds.  \hfill \qedsymbol

\begin{cor}
If the assumptions of Theorem 1 hold, but in addition $K^{AB}_{ij}(t)=0$, then for some $0<C<\infty$, \begin{equation}
    t_{\mathrm{s}}(N) \ge C N^{\alpha-1/2}. \label{eq:cor3}
\end{equation}\label{cor3}
\end{cor}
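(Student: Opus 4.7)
The plan is to specialize the argument of Theorem 1 by taking $S=\{v\}$ and splitting $H = H_0 + H_{\mathrm{NL}}$, where $H_0 := \sum_{i\in V} h^A_i X^A_i$ is the single-site part and $H_{\mathrm{NL}} := \sum_{i,j\in V} N^{-\alpha} J_{ij}^{AB} X^A_i X^B_j$ is the all-to-all part. Since $K^{AB}_{ij}=0$, the boundary term $\mathcal{L}_D$ that forced the use of Lemma 2 in the proof of the theorem is absent, and only the nonlocal contribution in (\ref{eq:2terms}) survives. I would apply the Duhamel identity (\ref{eq:duhamel}) with $\mathcal{L}_{<D}$ replaced by $\mathcal{L}_0$: because $H_0$ is strictly single-site, $\mathrm{e}^{\mathcal{L}_0 s}|\mathcal{O}_v)$ stays a unit-Frobenius-norm operator on the single vertex $v$ for all $s\ge 0$, which kills the free-evolution piece upon projection onto $\mathbb{P}_{V\setminus\{v\}}$.

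The key step is then to bound $\lVert\mathcal{L}_{\mathrm{NL}}\, \mathrm{e}^{\mathcal{L}_0 s}|\mathcal{O}_v)\rVert_2$. The commutator $[H_{\mathrm{NL}}, \mathrm{e}^{\mathcal{L}_0 s}|\mathcal{O}_v)]$ is nonzero only for those terms of $H_{\mathrm{NL}}$ whose Pauli pair touches $v$, so it decomposes into $O(N)$ mutually Frobenius-orthogonal two-site Pauli strings, each of weight $O(N^{-\alpha})$. Adding squared norms, exactly as in (\ref{eq:last3}), gives $\lVert\mathcal{L}_{\mathrm{NL}}\, \mathrm{e}^{\mathcal{L}_0 s}|\mathcal{O}_v)\rVert_2^2 \le C_1 N^{1-2\alpha}$. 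Integrating in $s$ and invoking the triangle inequality then yields $\lVert \mathbb{P}_{V\setminus\{v\}}|\mathcal{O}_v(t)) \rVert_2 \le C_2\, t\, N^{(1-2\alpha)/2}$.

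To close the argument, I would rerun the scrambling-time manipulation from the end of the theorem proof. Since $\mathbb{P}_j \le \mathbb{P}_{V\setminus\{v\}}$ as projections for every $j\ne v$, the definition (\ref{eq:tsdef}) forces $aN < 1 + (N-1)\lVert \mathbb{P}_{V\setminus\{v\}}|\mathcal{O}_v(t_\mathrm{s})) \rVert_2^2$ at $t=t_\mathrm{s}$, so $\lVert \mathbb{P}_{V\setminus\{v\}}|\mathcal{O}_v(t_\mathrm{s})) \rVert_2 > \sqrt{a/2}$ for all sufficiently large $N$, which combined with the previous display gives (\ref{eq:cor3}). Since the local edge interactions are gone, there is no real obstacle here: the interplay between slow local spreading and all-to-all coupling that complicated the proof of Theorem 1 simply collapses, and the corollary effectively saturates the $\sqrt{N}\times N^{-\alpha}$ heuristic advertised after the theorem.
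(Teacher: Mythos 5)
Your proof is correct and is essentially the same as the paper's: the paper states this corollary is obtained by setting $D=0$ (so $S_D=\{v\}$) in (\ref{eq:2terms}) and (\ref{eq:last4}), which makes the boundary term vanish and leaves only the nonlocal contribution with $|S_D|=1$ — precisely the Duhamel bound you wrote out, followed by the same rerun of the scrambling-time argument.
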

\begin{proof}
This is a simple extension of the proof of the main theorem.  If $K^{AB}_{ij}=0$, then in (\ref{eq:2terms}) we may consider $D=0$ (i.e. $S_D$ contains only the starting vertex $v$).  (\ref{eq:last4}) reduces to its first term with $|S_D|=1$.  This implies (\ref{eq:cor3}). 
\end{proof}

\section{Tightness of bounds}
We conclude by showing that the simplest of our bounds, Corollary \ref{cor3}, cannot be algebraically improved.  While this is probably not surprising due to the existence of other fast scramblers such as \cite{belyansky2020minimal}, we present an illustrative and simple (not many-body chaotic) protocol that saturates (\ref{eq:cor3}).

\begin{cor}
For $N$ sufficiently large, there exists a time-dependent $H(t)$, satisfying the assumptions of Corollary \ref{cor3} with $\alpha>\frac{1}{2}$, along with \begin{equation}
    J_{ij}^{AB}(t) = J_{ij}(t) \delta^{AZ}\delta^{BZ}
\end{equation} such that for any $\epsilon>0$, \begin{equation}
    t_{\mathrm{s}}(N) \le K N^{\alpha + \epsilon - 1/2} \label{eq:cor4}
\end{equation}
where $K$ is a finite constant that can depend on $\epsilon$.  
\end{cor}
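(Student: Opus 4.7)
The strategy is to exhibit an explicit Floquet Hamiltonian satisfying the hypotheses of Corollary \ref{cor3}, with $J^{AB}_{ij}\propto\delta^{AZ}\delta^{BZ}$, that scrambles in time $O(N^{\alpha-1/2}\,\mathrm{polylog}(N))$; since $\mathrm{polylog}(N)<N^\epsilon$ for any fixed $\epsilon>0$ at $N$ large, this yields (\ref{eq:cor4}).

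\emph{Protocol.} Fix a small constant $c>0$ to be chosen below. Take $J_{ij}(t)=1$ for all $i\neq j$ during ``Ising phases'' of duration $\tau = cN^{\alpha-1/2}$, separated by instantaneous global $\pi/2$ rotations about the $Y$-axis on every site, implemented by delta-function pulses in $h^Y_i(t)$. Each rotation sends $Z_k\mapsto X_k$. Since Corollary \ref{cor3} places no upper bound on the single-site fields, the protocol is admissible; run the protocol for $T=\Theta(\log N)$ cycles.

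\emph{Single-cycle analysis.} Because the $Z_iZ_j$ terms mutually commute, the first Ising phase is exactly solvable:
\[
e^{iH_I\tau}X_1 e^{-iH_I\tau} = X_1\cos(2W\tau) - Y_1\sin(2W\tau),
\]
with $W=(1/N^\alpha)\sum_{k\neq 1}Z_k$. A direct Frobenius computation gives average operator size $\sigma_1 = 1 + (N-1)\sin^2(2\tau/N^\alpha) \approx 1 + 4c^2$, i.e.\ additive constant growth. The role of the subsequent $Y$-rotation is crucial: each freshly generated $Z_k$ becomes $X_k$, so in the next Ising phase every such site acts as an independent broadcaster of further spread. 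Under the heuristic that each of the $\sigma_n$ current ``live'' sites contributes an additional $\sim 4c^2$ to $\sigma_{n+1}$, one obtains the recursion $\sigma_{n+1} \ge (1+\eta)\sigma_n$ for some constant $\eta = \eta(c)>0$, giving $\sigma_n \ge (1+\eta)^n$ until $\sigma_n = \Theta(N)$ at $n = O(\log N)$ and physical time $n\tau = O(N^{\alpha-1/2}\log N)$.

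\emph{Main obstacle and verification.} The only nontrivial step is a rigorous proof of the multiplicative lower bound $\sigma_{n+1}\ge(1+\eta)\sigma_n$, i.e.\ ruling out destructive interference among the many Pauli-string branches generated by expanding the cosines and sines across successive cycles. For the uniform $J_{ij}\equiv 1$ case this can be attacked by exploiting the permutation symmetry of $H_I$: the operator $X_1(t)$ decomposes into $S_N$-irreps, and within each irrep the dynamics reduces to a low-dimensional collective-spin (LMG-type) system amenable to explicit estimation; equivalently, in the rotating frame absorbing the pulses, the Floquet evolution Trotterises continuous evolution under $H_{\mathrm{eff}} = H_{ZZ}+H_{XX} \propto (S^x)^2+(S^z)^2$, and one computes the Heisenberg evolution of $X_1$ in the collective-spin basis, checking that the dominant contributions to $\sigma_n$ grow geometrically and sub-leading branches do not cancel them. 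Trotter errors of $O(c^2)$ per step can be absorbed into a slightly smaller $\eta$. Combining these estimates yields $t_s \le KN^{\alpha-1/2}\log N \le K'N^{\alpha+\epsilon-1/2}$ for any $\epsilon>0$ and $N$ sufficiently large.
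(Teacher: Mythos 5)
Your protocol is genuinely different from the paper's, and unfortunately the step you flag as ``the only nontrivial step'' is exactly the step the paper's construction is designed to avoid, and your sketch of how to fill it has an internal inconsistency.

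The paper does \emph{not} use uniform couplings. It picks $g=\mathrm{O}(1)$ \emph{disjoint} vertex sets $R_1,\ldots,R_g$ of size $M=\Theta(N)$ and runs Ising couplings only between consecutive sets, $Z_{R_{l-1}}Z_{R_l}$, alternated with $Y$-rotations on $R_l$. Working in the $X^\pm$ basis, the crucial structural fact is that $[Z_{R_{l-1}},\,\mathbb{J}^{l-1}_j A]=2\mathrm{i}j\,\mathbb{J}^{l-1}_j A$, i.e.\ Pauli strings with different imbalance $j$ in $R_{l-1}$ acquire \emph{different} commuting phases and hence evolve in mutually orthogonal subspaces of $\mathcal{B}$. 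This orthogonality is what makes the interference problem disappear by design: one can track the size distribution branch by branch, and the per-step size grows quadratically, $s_l^*\gtrsim(s_1^*)^l$, so $g>1/(2\epsilon)$ constant rounds with $\tau\sim N^{-(g-1)/2g}$ suffice. Your protocol replaces this with $J_{ij}\equiv 1$, which reinstates permutation symmetry and collapses the orthogonality: distinct branches living on overlapping supports now feed the same targets and do interfere, and you have not ruled that out.

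The proposed fix is not convincing as stated. The rotating-frame/Trotter heuristic points to $H_{\mathrm{eff}}\propto (S^x)^2+(S^z)^2 = S^2-(S^y)^2$, which commutes with both $S^2$ and $S^y$ and is therefore integrable; the averaged model cannot by itself produce the geometric size growth you need. Any chaos in your kicked-top-type stroboscopic map comes \emph{from} the Trotter corrections, so the sentence ``Trotter errors of $O(c^2)$ per step can be absorbed into a slightly smaller $\eta$'' has it backwards: the commutator terms are not a perturbation to be absorbed, they are the only possible source of the effect, and your per-cycle rotation angle is $O(1)$ so they are not small anyway. Similarly, ``decomposes into $S_N$-irreps'' and ``within each irrep the dynamics reduces to a low-dimensional collective-spin system'' does not directly control the Frobenius-norm size $\sum_i(\mathcal{O}|\mathbb{P}_i|\mathcal{O})$: the $\mathbb{P}_i$ are not $S_{N-1}$-covariant individually, and the average size is an $S_{N-1}$-invariant but not a few-degree-of-freedom observable. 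In short, the recursion $\sigma_{n+1}\ge(1+\eta)\sigma_n$ is the whole theorem and remains unproved; the paper sidesteps it entirely with the disjoint-$R_l$ / imbalance-sector argument, which is the idea your proposal is missing.

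What you do get right is the budget: $\tau\sim N^{\alpha-1/2}$, $\mathrm{O}(\log N)$ cycles, total time $N^{\alpha-1/2}\log N < KN^{\alpha-1/2+\epsilon}$, and the use of strong single-site pulses is admissible under Corollary~\ref{cor3}. The cleanest repair is to abandon uniform $J_{ij}$ and adopt the paper's sequential, disjoint-set couplings (or some other explicit symmetry-breaking pattern) so that the branch-orthogonality lemma holds; then your cycle count can even be reduced from $\log N$ to $\mathrm{O}(1/\epsilon)$ because each step squares the size rather than multiplying it by a constant.
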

\begin{proof}
We describe a simple protocol to grow large operators. We choose a $g=\mathrm{O}(1)$ and $M=\mathrm{O}(N)$ (to be specified more carefully later) such that \begin{equation}
    N \ge 1 + gM. \label{eq:NgM}
\end{equation}
Let $R_1,\ldots,R_g$ denote $g$ disjoint sets of vertices with $M$ elements.  By (\ref{eq:NgM}) there exists another vertex (let's call it 0) not in any of these sets.

For simplicity, we work in a basis of Pauli matrices $\lbrace 1, X^+, X^-, Z\rbrace$ on every site, where we define $\sqrt{2}X^\pm = X\pm \mathrm{i}Y$ (note the slightly unusual normalization).  Our goal is to build a protocol that starts with $X^+_0$ and time evolves it into an operator with average size of order $M$.  The protocol will work by first expanding the operator into set $R_1$ using two-body $ZZ$ couplings, then applying a rotation to convert all $Z$ in $R_1$ into $X$, then expanding into $R_2$ using $ZZ$ couplings, and so on.  After $l$ steps, the size of the operator will scale as $C^l$ with high probability. To be precise, we say that an operator $\mathcal{O}$ has size $s$ with probability $P_s$, where \begin{equation}
    P_s = (\mathcal{O}|\mathbb{Q}_s|\mathcal{O})
\end{equation} 
where $\mathbb{Q}_s$ is a projection superoperator onto Pauli strings with exactly $s$ non-identity components.  We will then show how to choose $g$ so that $C^g=M$ and (\ref{eq:cor4}) are both obeyed.

Let us now show how to achieve the goals outlined above, starting with the first step of the protocol.  Let \begin{equation}
    U_{Z,1} = \mathrm{e}^{-\mathrm{i} \frac{\tau}{2} Z_0 Z_{R_1}}
\end{equation}
where we have defined \begin{equation}
    X^A_{R_i} := \sum_{v\in R_i} X^A_v.
\end{equation}
A straightforward calculation (see e.g. \cite{lucas2019star}) shows that \begin{equation}
   U_{Z,1}^\dagger X^+_0 U_{Z,1} = X^+_0 \mathrm{e}^{\mathrm{i}\tau Z_{R_1}}. \label{eq:UZX0}
\end{equation}
The probability that $U_{Z,1}^\dagger X^+_0 U_{Z,1}$ has size $1\le s\le M+1$ after step 1 of the protocol is then \begin{equation}
    P_{s,1} = \left(\begin{array}{c} M \\ s-1 \end{array}\right) \left(\cos^2\tau\right)^{M-s-1} \left(\sin^2 \tau\right)^{s+1}.
\end{equation}
The average size of the operator is \begin{equation}
    \sum sP_{s,1} = 1 + M\sin^2\tau .
\end{equation}
Since the distribution is binomial, fluctuations about the mean are of order $\sqrt{M}|\sin\tau|$ and if $\tau$ is sufficiently large, these fluctuations will be small. Let us define an O(1) constants $c_{1,2}$ obeying $c_1<1<c_2$ such that \begin{equation}
    p_0 := 2^{-1/2g} < \sum_{s= 1 + s_1^* }^{1+\lceil c_2s_1^* \rceil} P_{s,1}.
\end{equation} 
where \begin{equation} 
    s_1^* := \lceil c_1 M\sin^2\tau \rceil.
\end{equation}
Most of the operator has size at least $s_1^*+1$.  So, if we define the projection \begin{equation}
    \mathbb{R}_1 := \sum_{s= 1 + \lceil c_1 M\sin^2\tau \rceil }^{1+\lceil c_2s_1^* \rceil} \mathbb{Q}_s,
\end{equation}  
then it suffices to keep track of only the operator $\mathbb{R}_1 U^\dagger_Z X^+_0 U_{Z,1}$ . Note also that there exists a finite constant $c_3$ such that the time it takes to run the unitary $U_{Z,1}$ by turning on only Ising couplings as given in (\ref{eq:thmH}) is given by \begin{equation}
    t_Z = c_3 N^{\alpha}\tau.
\end{equation}

The next step of the protocol corresponds to rotating the $Z$s in (\ref{eq:UZX0}) into $X = (X^+ + X^-)/\sqrt{2}$.  We can do this in an O(1) time $t_X$ using $H(t)$ in (\ref{eq:thmH}) by applying the unitary transformation \begin{equation}
    U_{Y,1} = \mathrm{e}^{-\mathrm{i}\frac{\pi}{4}Y_{R_1}}.
\end{equation}  
Define the final operator to be \begin{equation}
\mathcal{O}_1 = U^\dagger_{Y,1}U^\dagger_{Z,1}X^+_0U_{Z,1}U_{Y,1} 
\end{equation}
After this step, the probability that a Pauli string has $l_\pm$ $X^\pm$ Pauli strings in $R_1$ is given as follows.  Let\begin{equation}
    j = l_+ - l_-
\end{equation}
denote the difference between the number of $X^+$ and $X^-$, and let $\mathbb{J}^1_j$ project on to Pauli strings with this imbalance $j$.  Then \begin{equation}
    (\mathcal{O}_1| \mathbb{Q}_s \mathbb{J}^1_j \mathbb{Q}_s|\mathcal{O}_1) =  (\mathcal{O}_1| \mathbb{Q}_s |\mathcal{O}_1) \left(\begin{array}{c} s \\ \frac{1}{2}(s+j) \end{array}\right) \frac{1}{2^s} \label{eq:jbinomial}
\end{equation}
We define an O(1) constant $c_4$ such that for any value of $s\ge \lceil c_1 M\sin^2\tau \rceil$, \begin{equation}
    p_0 < \sum_{j : |j|>c_4 \sqrt{s}} \left(\begin{array}{c} s \\ \frac{1}{2}(s+j) \end{array}\right) \frac{1}{2^s}.
\end{equation}
Define the projector $\mathbb{K}_1$ onto all Pauli strings with $s$ $X^\pm$ in $R_1$, such that $s>\lceil c_1 M\sin^2\tau \rceil$ and with $|j|>c_4 \sqrt{s}$.  Then clearly, \begin{equation}
    \lVert \mathbb{K}_1 |\mathcal{O}_1) \rVert_2 \ge p_0^2.
\end{equation}

For the remaining steps of the protocol, we choose the unitaries \begin{subequations}
    \begin{align}
        U_{Z,l} &= \mathrm{e}^{-\mathrm{i}\frac{\tau}{2} Z_{R_{l-1}}Z_{R_l}}, \\
        U_{Y,l} &= \mathrm{e}^{-\mathrm{i}\frac{\pi}{4} Y_{R_l}}.
    \end{align}
\end{subequations}
We define \begin{equation}
    \mathcal{O}_l := U^\dagger_{Y,l}U^\dagger_{Z,l}\mathcal{O}_{l-1}U_{Z,l}U_{Y,l}.
\end{equation}
The key observation is that $\mathcal{O}_l$ consists entirely of Pauli strings of identity, $X^+$ and $X^-$, \emph{and} that every single Pauli string in $\mathcal{O}_{l-1}$, in our $\pm$ basis, evolves in an orthogonal subspace of operator space $\mathcal{B}$ relative to every other string during steps $l,l+1,\ldots g$.  Therefore, we can easily recursively analyze the operator growth after every step of the protocol. 

For example, when $l=2$, we can analyze the evolution of the operator $\mathbb{J}^1_j\mathbb{Q}_s \mathcal{O}_1$ separately for each $j$ and each $s$.  Upon doing so, we find that the probability for having size $s_2$ in domain $R_2$, with imbalance $j_2$ between $X^+$ and $X^-$ in $R_2$, is given by \begin{equation}
    P_{s_2,2}(j) = \left(\begin{array}{c} M \\ s_2 \end{array}\right) \left(\cos^2(j\tau)\right)^{M-s_2} \left(\sin^2 (j\tau)\right)^{s_2}
\end{equation}
    The answer only depends on $j$, since for any operator $A$
    \begin{equation}
        [Z_{R_1}, \mathbb{J}^1_j A] = 2\mathrm{i}j\mathbb{J}^1_j A.
    \end{equation}
    Now, we define the projector $\mathbb{R}_2$ onto all operators with at least $s_2^*$ $Z$s in $R_2$ (and not more than $c_2 s_2^*$), where \begin{equation}
        s_2^* := M\sin^2\left(c_4 \sqrt{s_1^*} \tau\right) > \frac{4c_4^2 }{\pi^2} \left(s_1^*\right)^2. 
    \end{equation}
    In deriving this formula, it was important that we could ignore zeros of $\sin^2(j\tau)$ away from $\tau=0$; we will confirm at the end of the proof that this is so.
    The distribution of $j_2$ after applying $U_{Y,2}$ is given by the binomial formula, similar to (\ref{eq:jbinomial}).    A straightforward generalization of the logic at step 1 tells us that \begin{equation}
        \lVert \mathbb{K}_2|\mathcal{O}_2)\rVert_2 \ge p_0^4
    \end{equation}
    where $\mathbb{K}_2$ projects onto all Pauli strings with at least $s_2^*$ $X^\pm$ in $R_2$ and with imbalance of at least $c_4 \sqrt{s_2^*}$.

    Clearly this procedure extends to all $l$.  The minimal size after each step is \begin{equation}
        s_l^* > \left(\frac{4c_4^2 }{\pi^2}\right)^{l-1} \left(s_1^*\right)^l.
    \end{equation}  After $g$ steps, we conclude that \begin{align}
        \sum_{i=1}^N (\mathcal{O}_g|\mathbb{P}_i|\mathcal{O}_g) &\ge \sum_{i=1}^N (\mathcal{O}_g|\mathbb{K}_g \mathbb{P}_i \mathbb{K}_g |\mathcal{O}_g) \ge s_g^* (\mathcal{O}_g | \mathbb{K}_g|\mathcal{O}_g)  \notag \\
        &\ge s_g^* p_0^{2g}=  \frac{s_g^*}{2}.
    \end{align}
    The total runtime of the protocol is \begin{equation}
        t_{\mathrm{s}} = g(t_Z+t_X).
    \end{equation}
    
    It remains to fix $g$ and $\tau$ such that (\ref{eq:cor4}) holds.  Since $t_X$ is O(1) it suffices to choose $g$ and $\tau$ such that for some $0<c_5<\infty$ \begin{subequations}\label{eq:cor4end}\begin{align}
        gc_3N^\alpha\tau &\le KN^{\alpha+\epsilon-1/2}, \\
        c_5 M &\le s_g^*.
        \end{align}
    \end{subequations}
    These inequalities hold if\begin{equation}
        s_1^* = \frac{\pi^2}{4c_4^2} \left(\frac{4c_4^2c_5N}{\pi^2 g}\right)^{1/g} > \frac{4}{\pi^2} c_1\frac{N}{g}\tau^2 \label{eq:s1starend}
    \end{equation}
    or, for a suitable $0<c_6<\infty$ that depends on $g$, \begin{equation}
        \tau = c_6 N^{-(g-1)/2g}. \label{eq:tauchoice}
    \end{equation}
        Choosing $g>1/2\epsilon$, we satisfy (\ref{eq:cor4end}).
        
        The final thing to confirm is that the ``imbalance" of $X^+$ and $X^-$ is always so small that $j\tau \ll 1$ holds, except at the final step.  Since $j\sim \sqrt{s}$ with high probability, (\ref{eq:s1starend}) and (\ref{eq:tauchoice}) confirm that this is so.  In the final step, due to the binomial distribution of $j$, there will be negligible concentration around $j\tau/\pi \in \mathbb{Z}$, and the probability that the operator has size $\mathrm{O}(M)$ is finite.
\end{proof}

\section*{Acknowledgements}
We thank Alexey Gorshkov, Ana Maria Rey and Brian Swingle for useful discussions.

\bibliography{scram_all2all_spin}

\end{document}